\newtheorem{theorem}{Theorem}
\newtheorem{definition}{Definition}
\newtheorem{lemma}{Lemma}
\newtheorem{proposition}{Proposition}
\newtheorem{conjecture}{Conjecture}
\newtheorem{example}{Example}
\newtheorem{corollary}{Corollary}
\def\bcj{\begin{conjecture}}
	\def\ecj{\end{conjecture}}
\def\bcr{\begin{corollary}}
	\def\ecr{\end{corollary}}
\def\bd{\begin{definition}}
	\def\ed{\end{definition}}
\def\bea{\begin{eqnarray}}
	\def\eea{\end{eqnarray}}
\def\bem{\begin{enumerate}}
	\def\eem{\end{enumerate}}
\def\bex{\begin{example}}
	\def\eex{\end{example}}
\def\bim{\begin{itemize}}
	\def\eim{\end{itemize}}
\def\bl{\begin{lemma}}
	\def\el{\end{lemma}}
\def\bma{\begin{bmatrix}}
	\def\ema{\end{bmatrix}}
\def\bpf{\begin{proof}}
	\def\epf{\end{proof}}
\def\bpp{\begin{proposition}}
	\def\epp{\end{proposition}}
\def\bqu{\begin{question}}
	\def\equ{\end{question}}
\def\br{\begin{remark}}
	\def\er{\end{remark}}
\def\bt{\begin{theorem}}
	\def\et{\end{theorem}}
\def\squareforqed{\hbox{\rlap{$\sqcap$}$\sqcup$}}
\def\qed{\ifmmode\squareforqed\else{\unskip\nobreak\hfil
		\penalty50\hskip1em\null\nobreak\hfil\squareforqed
		\parfillskip=0pt\finalhyphendemerits=0\endgraf}\fi}
\def\endenv{\ifmmode\;\else{\unskip\nobreak\hfil
		\penalty50\hskip1em\null\nobreak\hfil\;
		\parfillskip=0pt\finalhyphendemerits=0\endgraf}\fi}
\newenvironment{proof}{\noindent \textbf{{Proof.~} }}{\qed}
\def\Dbar{\leavevmode\lower.6ex\hbox to 0pt
	{\hskip-.23ex\accent"16\hss}D}
\def\url@leostyle{%
	\@ifundefined{selectfont}{\def\UrlFont{\sf}}{\def\UrlFont{\small\ttfamily}}}
\def\bcj{\begin{conjecture}}
	\def\ecj{\end{conjecture}}
\def\bcr{\begin{corollary}}
	\def\ecr{\end{corollary}}
\def\bd{\begin{definition}}
	\def\ed{\end{definition}}
\def\bea{\begin{eqnarray}}
	\def\eea{\end{eqnarray}}
\def\bem{\begin{enumerate}}
	\def\eem{\end{enumerate}}
\def\bex{\begin{example}}
	\def\eex{\end{example}}
\def\bim{\begin{itemize}}
	\def\eim{\end{itemize}}
\def\bl{\begin{lemma}}
	\def\el{\end{lemma}}
\def\bpf{\begin{proof}}
	\def\epf{\end{proof}}
\def\bpp{\begin{proposition}}
	\def\epp{\end{proposition}}
\def\bqu{\begin{question}}
	\def\equ{\end{question}}
\def\br{\begin{remark}}
	\def\er{\end{remark}}
\def\bt{\begin{theorem}}
	\def\et{\end{theorem}}
\def\btb{\begin{tabular}}
	\def\etb{\end{tabular}}
	\newcommand{\nc}{\newcommand}
	\nc{\bbA}{\mathbb{A}} \nc{\bbB}{\mathbb{B}} \nc{\bbC}{\mathbb{C}}
	\nc{\bbD}{\mathbb{D}} \nc{\bbE}{\mathbb{E}} \nc{\bbF}{\mathbb{F}}
	\nc{\bbG}{\mathbb{G}} \nc{\bbH}{\mathbb{H}} \nc{\bbI}{\mathbb{I}}
	\nc{\bbJ}{\mathbb{J}} \nc{\bbK}{\mathbb{K}} \nc{\bbL}{\mathbb{L}}
	\nc{\bbM}{\mathbb{M}} \nc{\bbN}{\mathbb{N}} \nc{\bbO}{\mathbb{O}}
	\nc{\bbP}{\mathbb{P}} \nc{\bbQ}{\mathbb{Q}} \nc{\bbR}{\mathbb{R}}
	\nc{\bbS}{\mathbb{S}} \nc{\bbT}{\mathbb{T}} \nc{\bbU}{\mathbb{U}}
	\nc{\bbV}{\mathbb{V}} \nc{\bbW}{\mathbb{W}} \nc{\bbX}{\mathbb{X}}
	\nc{\bbZ}{\mathbb{Z}}
	\nc{\bA}{{\bf A}} \nc{\bB}{{\bf B}} \nc{\bC}{{\bf C}}
	\nc{\bD}{{\bf D}} \nc{\bE}{{\bf E}} \nc{\bF}{{\bf F}}
	\nc{\bG}{{\bf G}} \nc{\bH}{{\bf H}} \nc{\bI}{{\bf I}}
	\nc{\bJ}{{\bf J}} \nc{\bK}{{\bf K}} \nc{\bL}{{\bf L}}
	\nc{\bM}{{\bf M}} \nc{\bN}{{\bf N}} \nc{\bO}{{\bf O}}
	\nc{\bP}{{\bf P}} \nc{\bQ}{{\bf Q}} \nc{\bR}{{\bf R}}
	\nc{\bS}{{\bf S}} \nc{\bT}{{\bf T}} \nc{\bU}{{\bf U}}
	\nc{\bV}{{\bf V}} \nc{\bW}{{\bf W}} \nc{\bX}{{\bf X}}
	\nc{\ba}{{\bf a}} \nc{\be}{{\bf e}} \nc{\bu}{{\bf u}}
	\nc{\brr}{{\bf r}}
	\nc{\cA}{{\cal A}} \nc{\cB}{{\cal B}} \nc{\cC}{{\cal C}}
	\nc{\cD}{{\cal D}} \nc{\cE}{{\cal E}} \nc{\cF}{{\cal F}}
	\nc{\cG}{{\cal G}} \nc{\cH}{{\cal H}} \nc{\cI}{{\cal I}}
	\nc{\cJ}{{\cal J}} \nc{\cK}{{\cal K}} \nc{\cL}{{\cal L}}
	\nc{\cM}{{\cal M}} \nc{\cN}{{\cal N}} \nc{\cO}{{\cal O}}
	\nc{\cP}{{\cal P}} \nc{\cQ}{{\cal Q}} \nc{\cR}{{\cal R}}
	\nc{\cS}{{\cal S}} \nc{\cT}{{\cal T}} \nc{\cU}{{\cal U}}
	\nc{\cV}{{\cal V}} \nc{\cW}{{\cal W}} \nc{\cX}{{\cal X}}
	\nc{\cZ}{{\cal Z}}
	\nc{\hA}{{\hat{A}}} \nc{\hB}{{\hat{B}}} \nc{\hC}{{\hat{C}}}
	\nc{\hD}{{\hat{D}}} \nc{\hE}{{\hat{E}}} \nc{\hF}{{\hat{F}}}
	\nc{\hG}{{\hat{G}}} \nc{\hH}{{\hat{H}}} \nc{\hI}{{\hat{I}}}
	\nc{\hJ}{{\hat{J}}} \nc{\hK}{{\hat{K}}} \nc{\hL}{{\hat{L}}}
	\nc{\hM}{{\hat{M}}} \nc{\hN}{{\hat{N}}} \nc{\hO}{{\hat{O}}}
	\nc{\hP}{{\hat{P}}} \nc{\hR}{{\hat{R}}} \nc{\hS}{{\hat{S}}}
	\nc{\hT}{{\hat{T}}} \nc{\hU}{{\hat{U}}} \nc{\hV}{{\hat{V}}}
	\nc{\hW}{{\hat{W}}} \nc{\hX}{{\hat{X}}} \nc{\hZ}{{\hat{Z}}}
	\nc{\hn}{{\hat{n}}}
	\def \qed {\hfill \vrule height7pt width 7pt depth 0pt}
	\newcounter{lastnote}
\begin{document}
		
		\title{ Small set of orthogonal product states with nonlocality }

		\author{Yan-Ling Wang}
		\email{wangylmath@yahoo.com}
		\affiliation{ School of Computer Science and Technology, Dongguan University of Technology, Dongguan, 523808, China}
		
		\author{Wei Chen}	
				\email{2016025@dgut.edu.cn}
		\affiliation{ School of Computer Science and Technology, Dongguan University of Technology, Dongguan, 523808, China}

	\author{Mao-Sheng Li}
\email{li.maosheng.math@gmail.com}
\affiliation{ School of Mathematics,
	South China University of Technology, Guangzhou
	510641,  China}


		\begin{abstract}
			A set of orthogonal states in multipartite systems is called to be  locally stable if to preserving the orthogonality of the states, only trivial local measurement can  be	performed from each partite. Locally stable set of states are always locally indistinguishable  yielding a form of nonlocality which is different from the Bell type nonlocality. In this work, we study the locally stable set of product states with small size.
			First, we give a lower bound on the size of  locally stable set of product states. It  is well known that  unextendible product basis (UPB) is locally indistinguishable. But we find that some of them are not locally stable.  On the other hand, there exists some small subset of minimum size UPB that are also locally stable which implies the nonlocality of such  UPBs are stronger than other form.   
			
		\end{abstract}

		\maketitle
		\section{Introduction} Quantum nonlocality plays an important role in quantum theory.   The most striking one is the Bell nonlocality which could arise only in the appearence of entanglement\cite{nils}.  Bennett   et al.'s   \cite{Ben99}   pointed out that there is another kind of nonlocality (could be deduced without entanglement) which is different the Bell type nonlocality (must with entanglement) based on the concept of quantum states discrimination.  The quantum states discrimination task is to identify the randomly chosen state from a  known set.   It is well known that a set of states can be perfectly distinguished if and only if the known set is mutually orthogonal.  As    quantum states are usually  distributed in composite systems, therefore it is much more meaningful to restrict our measurement to   local operations and classical
		communication (LOCC).  We call such problem as  a local discrimination task.  Given an orthogonal set of multipartite quantum states,   if the correspoding local discrimination task can be   accomplished perfectly, we say that the set is \emph{locally distinguishable},
		otherwise, \emph{locally indistinguishable}.   Bennett et al. \cite{Ben99} presented the first set 
		of nine orthogonal product states in $\mathbb{C}^3\otimes  \mathbb{C}^3$ that are locally indistinguishable  which implies some global measurement will yield more information than those obtained from local measurement.    Therefore, they named such phenomenon as quantum nonlocality without entanglement.   Moreover, the results of  local  indistinguishability of  quantum states have  been practically applied in quantum cryptography primitives such as   data hiding \cite{Terhal01,DiVincenzo02} and secret sharing \cite{Markham08,Rahaman15,WangJ17}.

		Since  Bennett et al. \cite{Ben99} proposed such kind of nonlocality, lots of related research     has been studied extensively (see Refs. \cite{Gho01,Wal00,Wal02,Fan04,Nathanson05,Cohen07,Bandyopadhyay11,Li15,Fan07,Yu12,Cos13,Yu115,Wang19,Xiong19,Li20,Banik21,Ran04,Hor03,Ben99b,DiVincenzo03,Zhang14,Zhang15,Zhang16,Xu16b,Xu16m,Xu17QIP,Zhang16b,Wang15,Feng09,		Yang13,Zhang17,Halder18,Li18,Halder1909,Xu20b,Rout1909,Halder20c,Halder19,Xu20a,Bhunia21,Zhang21e,Zuo21}  for an
		incomplete list). 
		Most of the results are based on the method proposed by Walgate and   Hardy \cite{Wal02}.  The idea is based on the fact every measurement in a perfectly local distinguishing protocol must be orthogonality preserving.  Given a set of multipartite quantum states, if one can show that  only   trivial local orthogonal preserving measurement could be performed by  each partite, then one can conclude that the set is locally indistinguishable.  Using this method, there are a lots of   studies extend Bennett et al's results to general multipartite systems (see Refs. \cite{Zhang14,Zhang15,Wang15,Zhang16,Xu16b,Xu16m,Zhang16b,Xu17QIP,Zhang17,Halder18,Li18,Halder1909,Rout1909,Xu20b,Xu20a,Halder20c}). Recently, there are several kinds of constructions of locally indistinguishable sets of product states in multipartite systems. The cardinality of such sets are always different. In general, the smaller of  the cardinality, the better of the constructing set. Therefore, it is interesting to determine the smallest locally indistinguishable set for a given system.
		However, there is a trivial answer  as we can always embedding sets in small dimension to large dimension and  a few partite to  more partite. Therefore, in order to avoid this embarrassment we replace this kind of nonlocality by a strong one (locally stable set) which is recently introduced in Ref. \cite{Shi_Li21,Li21}. In this paper, we will study the quantum nonlocality of set of product states based on the concept of locally stable set.

		An orthogonal set of states is said to be locally stable (See Ref. \cite{Li21}) if  every orthogonality preserving measurement  for each subsystems is trivial. Therefore, locally stable sets are always locally indistinguishale sets. But the converse statement is true in general. It is well known that every unextendible product basis (UPB) (a set of incomplete
		orthonormal product states whose complementary space has no product state, See Ref. \cite{DiVincenzo03,AL01,CJ15,J14} ) is locally indistinguishable.   So it is interesting to consider whether UPBs are also locally stable or not. Moreover, the UPB of  smallest size   for a given system seems to be the smallest nonlocality set of product states  for a given system. However, we will show that this intuitive understanding is incorrect. In fact, there exists a much small subset of such UPB which is still locally stable.

		The rest of this article is organized as follows. In Sec. \ref{sec:Pre}, we   introduce a special form of quantum nonlocality called local stable set and review some known results about it. In Sec. \ref{sec:smallset}, we study locally stable set of product states including both upper bound and lower bound on the size of such set.  Moreover, we pay attention to the $n$ qubit systems and $n$ qutrit systems and study the local stableness of some UPBs.       Finally, we draw a conclusion and present some interesting problems in   the last section \ref{sec:Conclusion}.

		\vskip 12pt	
		
		\section{Preliminaries}\label{sec:Pre}

		For any positive integer $d\geq 2$, we denote $\bbZ_{d}:=\{0,1,\cdots,d-1\}$ as the cyclic group with $d$ elements  and we denote $[d]$ as the set $\{1,2,\cdots,d\}$. 	Let $\cH$ be an $d$ dimensional Hilbert space. We always assume that $\{|0\rangle, |1\rangle, \cdots,|d-1\rangle\}$ is the computational basis of $\cH$. A positive operator-valued measure (POVM) on $\mathcal{H}$ is a set of  positive semidefinite operators $\{E_x\}_{x\in\cX}$ such that $\sum_{x\in\cX} E_x= \mathbb{I}_{\mathcal{H}}$ where $\mathbb{I}_{\mathcal{H}}$  is the identity operator on   ${\mathcal{H}}$.
		The measurement  $\{E_x\}_{x\in\cX}$  is  called a  trivial measurement if   $E_x\propto \mathbb{I}_\mathcal{H}$.

		As nonorthogonal set of states can not be distignuished perfectly even   global measurements are allowed, the measurements  in the perfect LOCC distinguishing protocol are always  orthogonality preserving local
		measurements.			 To prove a set of orthogonal states are locally indistinguishable, an usual method is to show that  every  orthogonality preserving local
		measurement is a trivial measurement.
		Motivated by this, the authors in Ref. \cite{Li21}   introduced  the following concept.
		
		\begin{definition}[Locally stable]\label{def: locally stable}
			An orthogonal set of pure states in multipartite quantum system is said to be locally stable if only trivial orthogonality preserving measurement can be performed for each local subsystems.
		\end{definition}

		Let $\mathcal{H}_{A}\otimes\mathcal{H}_B$ be a composed bipartite systems whose local dimensions  are $d_A$ and $d_B$ respectively and suppose $\{|i\rangle_{A}| i\in\mathbb{Z}_{d_A}\}$ and $\{|j\rangle_{B}| j\in\mathbb{Z}_{d_B}\}$ be the computational bases of  subsystems $A$ and $B$ respectively. Given an orthogonal set $\cS=\{|\psi_k\rangle\}_{k=1}^N$ of pure states in  $\mathcal{H}_{A}\otimes\mathcal{H}_B$. The authors in Ref. \cite{Li21} gave a sufficient and necessary condition for the set $\mathcal{S}$ to be locally stable.  Now we give a brief review  of  such a condition.  
		
		The statement ``Bob could only start with trivial measurement" is equivalent to that $$  
		\langle \psi_k|\mathbb{I}_{A} \otimes E |\psi_l\rangle=0, \text{ for } k\neq  l $$
		implies $E\propto \mathbb{I}_B.$ It was shown that this is equivalent to 
		\begin{equation}\label{eq:dimeq} \text{dim}_\mathbb{C}[\mathcal{D}_{A|B}(\mathcal{S})]=d_B^2-1.
		\end{equation}
		Here $\mathcal{D}_{A|B}(\mathcal{S})$ is the linear subspace of $\text{Mat}_{d_B\times d_B}(\mathbb{C})$ spanned by 
		$$	\sum_{i\in \mathbb{Z}_{d_A}}   |\psi_{k,i}\rangle_{B}\langle \psi_{l,i} |, 1\leq k\neq l \leq N$$
		where   $|\psi_k\rangle$ and  $|\psi_l\rangle$ are  written  as the following form  
		$$\begin{array}{c}
			|\psi_k\rangle=\displaystyle\sum_{ i\in \mathbb{Z}_{d_A}}|i\rangle_{A}|\psi_{k,i}\rangle_{B},  \ \ 
			|\psi_l\rangle=\displaystyle\sum_{ i\in \mathbb{Z}_{d_A}}|i\rangle_{A}|\psi_{l,i}\rangle_{B}. 
		\end{array}$$
		
		In this paper, we mainly consider the sets of product states. Given $\cS=\{|\psi_k\rangle=|\theta_k\rangle_A|\xi_k\rangle_B\}_{k=1}^N$ of pure  product states in  $\mathcal{H}_{A}\otimes\mathcal{H}_B.$ For $k\neq l$, if  $|\theta_k\rangle_A=\sum_{i\in \mathbb{Z}_{d_A}} a_{k,i} |i\rangle_A$ and  $|\theta_l\rangle_A=\sum_{i\in \mathbb{Z}_{d_A}} a_{l,i} |i\rangle_A$,  then 
		$$\begin{array}{c}
			|\psi_k\rangle=\displaystyle\sum_{ i\in \mathbb{Z}_{d_A}}|i\rangle_{A} a_{k,i}|\xi_k\rangle_{B},  \ \ 
			|\psi_l\rangle=\displaystyle\sum_{ i\in \mathbb{Z}_{d_A}}|i\rangle_{A} a_{l,i}|\xi_l\rangle_{B}. 
		\end{array}$$
		That is,   $|\psi_{k,i}\rangle_{B}=a_{k,i} |\xi_k\rangle_B$ and $|\psi_{l,i}\rangle_{B}=a_{l,i} |\xi_l\rangle_B$ for $i\in\mathbb{Z}_{d_A}.$ Therefore, 
		$$\sum_{i\in \mathbb{Z}_{d_A}}   |\psi_{k,i}\rangle_{B}\langle \psi_{l,i} | = \left(\sum_{i\in \mathbb{Z}_{d_A}} a_{k,i} \overline{a_{l,i}} \right) |\xi_k\rangle_B\langle \xi_l|.$$  
		Note that  $\langle \theta_l|\theta_k\rangle_A =\sum_{i\in \mathbb{Z}_{d_A}} a_{k,i} \overline{a_{l,i}}$. Therefore, if  $\langle \theta_l|\theta_k\rangle_A=0,$ we have $\sum_{i\in \mathbb{Z}_{d_A}}   |\psi_{k,i}\rangle_{B}\langle \psi_{l,i} |=\boldsymbol{0}.$ So in this setting, we have 
		\begin{equation}\label{eq:SetSimple}			   \mathcal{D}_{A|B}(\mathcal{S})=\text{span}_\mathbb{C}\{|\xi_k\rangle_B\langle \xi_l| \mid 1\leq k\neq l\leq N,  \langle \theta_l|\theta_k\rangle_A\neq 0\}.
		\end{equation}

		\section{constructing small locally stable set of product states}\label{sec:smallset}
		Fix a multipartite systems $ \otimes_{i=1}^n\mathcal{H}_{A_i}$ where the dimension of each subsystem is at least 2, i.e.,  $\mathrm{dim}_{\mathbb{C}}(\mathcal{H}_{A_i})=d_i\geq 2$. It is interesting to consider the minimal set of product states in $\otimes_{i=1}^n\mathcal{H}_{A_i}$ that is locally stable. We define the  minimum size of locally stable set of  product states  in  $\otimes_{i=1}^n\mathcal{H}_{A_i}$ as $P(d_1,d_2,\cdots,d_n).$   First, we give an estimation on its lower bound.

		\begin{theorem}[On Lower Bound]\label{thm:ProductBasis_Bounds}
			Let $\cS$ be an orthogonal   set of  pure product states in  $\otimes_{i=1}^n\mathcal{H}_{A_i}$   whose local dimension  is  $\text{dim}_\mathbb{C}(\mathcal{H}_{A_i})=d_{i}$. 
			If  the set  $\cS$  is  locally  stable, then  
			$$(|\cS|-1) |\cS|\geq \sum_{i=1}^n(d_i^2-1).$$
			Particular, we   have $P(d_1,d_2,\cdots,d_n)\geq \frac{-1+\sqrt{(1+4D)}}{2}$ where $D=\displaystyle\sum_{i=1}^n(d_i^2-1).$
			
		\end{theorem}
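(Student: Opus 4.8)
The plan is to reduce the multipartite local stability condition to one bipartite dimension condition per party, and then to play the resulting $n$ dimension constraints against the total number of ordered pairs of states, using orthogonality of the product states to guarantee that each pair is ``charged'' to at most one party.

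Write $\cS=\{|\psi_k\rangle=\bigotimes_{i=1}^n|\theta_k^{(i)}\rangle\}_{k=1}^N$ with $N=|\cS|$ and $|\theta_k^{(i)}\rangle\in\mathcal{H}_{A_i}$. First I would apply the bipartite criterion \eqref{eq:dimeq}--\eqref{eq:SetSimple} to the cut $A_i\,|\,(\text{rest})$ for each fixed party $i$, treating $A_i$ as the measured subsystem: local stability at party $i$ is exactly the statement that
$$\mathcal{D}_i(\cS)=\text{span}_{\mathbb{C}}\Bigl\{\,|\theta_k^{(i)}\rangle\langle\theta_l^{(i)}|\ \Big|\ 1\leq k\neq l\leq N,\ \prod_{j\neq i}\langle\theta_l^{(j)}|\theta_k^{(j)}\rangle\neq0\,\Bigr\}\subseteq\text{Mat}_{d_i\times d_i}(\mathbb{C})$$
satisfies $\text{dim}_{\mathbb{C}}\mathcal{D}_i(\cS)=d_i^2-1$. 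Here the scalar $\prod_{j\neq i}\langle\theta_l^{(j)}|\theta_k^{(j)}\rangle$ is precisely the analogue for this cut of the coefficient $\langle\theta_l|\theta_k\rangle_A$ in \eqref{eq:SetSimple}. Since $\cS$ is locally stable, this holds for every $i$.

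The key step is a disjointness observation. For each $i$ let $m_i$ be the number of ordered pairs $(k,l)$, $k\neq l$, that index a generator of $\mathcal{D}_i(\cS)$; since a set of $m_i$ vectors spans a space of dimension at most $m_i$, local stability forces $d_i^2-1=\text{dim}_{\mathbb{C}}\mathcal{D}_i(\cS)\leq m_i$. I claim each ordered pair $(k,l)$ is counted by at most one of the $m_i$. Indeed, orthogonality of the product states gives $0=\langle\psi_l|\psi_k\rangle=\prod_{j=1}^n\langle\theta_l^{(j)}|\theta_k^{(j)}\rangle$, so at least one factor vanishes. If $(k,l)$ generates $\mathcal{D}_i(\cS)$, then $\prod_{j\neq i}\langle\theta_l^{(j)}|\theta_k^{(j)}\rangle\neq0$, so no factor with $j\neq i$ vanishes; the vanishing factor must be the one at $j=i$, i.e.\ $\langle\theta_l^{(i)}|\theta_k^{(i)}\rangle=0$. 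This singles out $i$ uniquely, since generating $\mathcal{D}_{i'}(\cS)$ for $i'\neq i$ would require $\langle\theta_l^{(i)}|\theta_k^{(i)}\rangle\neq0$, a contradiction.

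Summing finishes the argument. Because there are exactly $N(N-1)$ ordered pairs $(k,l)$ with $k\neq l$ and each contributes to at most one spanning set, $\sum_{i=1}^n m_i\leq N(N-1)$; combined with $d_i^2-1\leq m_i$ this gives
$$\sum_{i=1}^n(d_i^2-1)\leq\sum_{i=1}^n m_i\leq N(N-1)=(|\cS|-1)\,|\cS|,$$
which is the asserted inequality. For the bound on $P(d_1,\dots,d_n)$, set $N=P(d_1,\dots,d_n)$ and $D=\sum_{i=1}^n(d_i^2-1)$; then $N^2-N-D\geq0$ with $N>0$ forces $N\geq\frac{1+\sqrt{1+4D}}{2}\geq\frac{-1+\sqrt{1+4D}}{2}$, yielding the stated estimate (in fact with the slightly stronger constant $\frac{1+\sqrt{1+4D}}{2}$). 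The main obstacle is isolating the disjointness observation: it is exactly what upgrades the $n$ separate inequalities $d_i^2-1\leq m_i$ into a single \emph{summed} bound rather than merely $N(N-1)\geq\max_i(d_i^2-1)$, while the per-party reduction to \eqref{eq:dimeq}--\eqref{eq:SetSimple} is routine once the correct bipartite cut is chosen.
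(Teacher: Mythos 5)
Your proof is correct and is essentially the paper's own argument: your generator-pair counts $m_i$ are exactly the cardinalities of the paper's sets $C_{\hat{A}_i|A_i}(\mathcal{S})$, and your disjointness claim (each ordered pair is charged to the unique party where the local factors are orthogonal) is the paper's key step, combined in the same way with the per-party dimension condition \eqref{eq:dimeq}. Your only addition is spelling out the quadratic-formula step, where you rightly note that the argument actually yields the slightly stronger bound $P\geq\frac{1+\sqrt{1+4D}}{2}$.
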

		\begin{proof}
			Set $\mathcal{S}=\{\otimes_{i=1}^n |\psi^{A_i}_j\rangle \}_{j=1}^l$ where $l=|\mathcal{S}|$. 
			For each $i\in[n]$, we define
			\begin{equation*}
				\begin{array}{rl}
					C_{\hat{A}_i|A_i}(\mathcal{S}): =&\{(j,k)\in[l]\times [l]\ \mid  \langle \psi^{\hat{A}_i}_k |\psi^{\hat{A}_i}_j\rangle \neq 0,   j\neq k \},
				\end{array}				
			\end{equation*}		
			here  $|\psi^{\hat{A}_i}_j\rangle:=\otimes_{r\in [n]\setminus\{i\}} |\psi_j^{A_r}\rangle.$ As $\mathcal{S}$ is an orthogonal set, for each $(j,k)\in[l]\times [l]$ and $j\neq k$, there exists some $r\in[n]$ such that  $\langle \psi^{ {A}_r}_k |\psi^{ {A}_r}_j\rangle = 0.$ For all $i\in [n] \setminus\{r\},$  $\langle \psi^{\hat{A}_i}_k |\psi^{\hat{A}_i}_j\rangle=0$ by  definition. Hence   $(j,k)\notin C_{\hat{A}_i|A_i}(\mathcal{S})$  for such $i.$ Hence, $(j,k)\in C_{\hat{A}_i|A_i}(\mathcal{S})$  only if $i=r$. So each $(j,k)$ belong to at most one of the   sets $C_{\hat{A}_i|A_i}(\mathcal{S}), {i\in[n]}$.  This implies that the sets $C_{\hat{A}_i|A_i}(\mathcal{S}), i\in [n]$  are pairwise disjoint. As  each $C_{\hat{A}_i|A_i}(\mathcal{S})$  is a subset  of $[l]\times [l] \setminus \{(r,r)|r\in [l]\}$, by the  disjointness of $C_{\hat{A}_i|A_i}(\mathcal{S})$ ($i\in [n]$), we have
			\begin{equation}\label{eq:bound1}  \sum_{i=1}^n |C_{\hat{A}_i|A_i}(\mathcal{S})| \leq (l-1)l.
			\end{equation}
			By Eq. \eqref{eq:SetSimple}, we have  
			\begin{equation*}
				\begin{array}{cc}
					\mathcal{D}_{\hat{A}_i|A_i}(\mathcal{S})=\mathrm{span}_\mathbb{C}(\{ |\psi^{{A}_i}_j\rangle  \langle \psi^{ {A}_i}_k| \  \big |\   (j,k)\in  	C_{\hat{A}_i|A_i}(\mathcal{S})  \}).
				\end{array}
			\end{equation*}
			So we get $ |C_{\hat{A}_i|A_i}(\mathcal{S})| \geq  \mathrm{dim}_{\mathbb{C}}[\mathcal{D}_{\hat{A}_i|A_i}(\mathcal{S})].$ As $\mathcal{S}$ is locally stable, by its equivalent form as in Eq. \eqref{eq:dimeq}, we obtain  
			$  \mathrm{dim}_{\mathbb{C}}[\mathcal{D}_{\hat{A}_i|A_i}(\mathcal{S})]=d_i^2-1$ 
			for $i\in [n].$  Therefore, we have the following inequalities 
			$$  |C_{\hat{A}_i|A_i}(\mathcal{S})| \geq \mathrm{dim}_{\mathbb{C}}[\mathcal{D}_{\hat{A}_i|A_i}(\mathcal{S})]=d_i^2-1$$ 
			for $i\in [n].$ 
			Summing up these terms, we have
			\begin{equation}\label{eq:bound2}  \sum_{i=1}^n |C_{\hat{A}_i|A_i}(\mathcal{S})| \geq\sum_{i=1}^n (d_i^2-1).
			\end{equation}
			From inequalities \eqref{eq:bound1} and \eqref{eq:bound2}, we obtain 
			$$(|\cS|-1) |\cS|=(l-1)l\geq \sum_{i=1}^n |C_{\hat{A}_i|A_i}(\mathcal{S})|\geq  \sum_{i=1}^n(d_i^2-1).$$
			This completes the proof.				
		\end{proof}
		
		\vskip 10pt
		
		In Theorem \ref{thm:ProductBasis_Bounds}, set $n=3, d_1=d_2=d_3=2,$ we have $(P(2,2,2)-1)P(2,2,2)\geq 9$.  So $P(2,2,2)\geq 4.$ On the other hand,  one can easily check that  the following set of four states which forms a UPBs \cite{DiVincenzo03}
		$$
		|000\rangle, |+-1\rangle, |1+-\rangle, |-1+\rangle 
		$$ (where $|+\rangle=|0+1\rangle=|0\rangle+|1\rangle$ and $ |-\rangle=|0-1\rangle=|0\rangle-|1\rangle$)   is   locally stable in $\mathbb{C}^2\otimes \mathbb{C}^2\otimes \mathbb{C}^2$.  Therefore, $P(2,2,2)=4$. One can easily show that a set of three entangled states in $\mathbb{C}^2\otimes \mathbb{C}^2\otimes \mathbb{C}^2$ is enough to show the local stableness. For example, the set of the following three states
		$$|000\rangle+|111\rangle, |000\rangle-|111\rangle, |001\rangle+|010\rangle+|100\rangle$$
		is locally stable (this example can be generalized to $n$ qubit systems. So there is a set of three entangled states in $n$ qubit systems that is locally stable).    So it seems that the set with entanglement may help to increase the local stableness in some sense.

		Now we present a property of   locally stable sets which are helpful for deriving upper bound of $P(d_1,\cdots, d_n).$
		\begin{proposition}[On Upper Bound]\label{pro:Nonlocal_Tensor}
			Let $\mathcal{S}_1$   and $\mathcal{S}_2$ be  two 	locally stable sets of   states in  $\otimes_{i=1}^m\mathcal{H}_{A_i}$ and $\otimes_{j=1}^n\mathcal{H}_{B_j}$ respectively. Then there exists a locally stable set   with cardinality $|\mathcal{S}_1|+|\mathcal{S}_2|-1$   in the $(m+n)$-parties systems $(\otimes_{i=1}^m\mathcal{H}_{A_i})\otimes (\otimes_{j=1}^n\mathcal{H}_{B_j})$.  Particularly, we have the following inequality
			$$ P({\bm d}_A,{\bm d}_B)\leq P({\bm d}_A)+P({\bm d}_B)-1$$
			where ${\bm d}_A:=(d_{A_1},\cdots,d_{A_m})$ and ${\bm d}_B:=(d_{B_1},\cdots,d_{B_n})$.
		\end{proposition}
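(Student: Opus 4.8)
The plan is to produce $\mathcal{S}$ by \emph{gluing} the two locally stable sets along one common vector. Write $\mathcal{S}_1=\{\ket{\phi_k}\}_{k=1}^{p}\subseteq\mathcal{H}_A:=\otimes_{i=1}^m\mathcal{H}_{A_i}$ and $\mathcal{S}_2=\{\ket{\chi_l}\}_{l=1}^{q}\subseteq\mathcal{H}_B:=\otimes_{j=1}^n\mathcal{H}_{B_j}$, where $p=|\mathcal{S}_1|$ and $q=|\mathcal{S}_2|$, and fix the representatives $\ket{\phi_1}$ and $\ket{\chi_1}$. I would set
\begin{equation*}
\mathcal{S}:=\big\{\ket{\phi_k}_A\ket{\chi_1}_B:k\in[p]\big\}\ \cup\ \big\{\ket{\phi_1}_A\ket{\chi_l}_B:2\le l\le q\big\}.
\end{equation*}
Since $\ket{\phi_1}_A\ket{\chi_1}_B$ is the unique vector common to the two families and is listed only once, we get $|\mathcal{S}|=p+(q-1)=|\mathcal{S}_1|+|\mathcal{S}_2|-1$, as required; and because tensoring a product state over the $A_i$ with a product state over the $B_j$ yields a product state over all $m+n$ parties, $\mathcal{S}$ consists of product states whenever $\mathcal{S}_1$ and $\mathcal{S}_2$ do.

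First I would check mutual orthogonality. Two states in the first family are orthogonal on the $A$ side (distinct $\ket{\phi_k}$), two in the second family are orthogonal on the $B$ side (distinct $\ket{\chi_l}$), and a state $\ket{\phi_k}_A\ket{\chi_1}_B$ with $k\ge 2$ is orthogonal to $\ket{\phi_1}_A\ket{\chi_l}_B$ with $l\ge 2$ on both factors. Hence every pair in $\mathcal{S}$ is orthogonal. The heart of the argument is then a \emph{restriction} step. Let $\{E_x\}$ be an orthogonality preserving measurement on a single party; I must show $E_x\propto\mathbb{I}$. If that party is some $A_i$, I would restrict to the subset $\mathcal{S}'=\{\ket{\phi_k}_A\ket{\chi_1}_B:k\in[p]\}$, all of whose members share the factor $\ket{\chi_1}_B$. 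Because $E_x$ acts only on $A_i$ and identity acts on every other party (in particular on $B$), evaluating the orthogonality preserving condition on a pair from $\mathcal{S}'$ gives
\begin{equation*}
0=\bra{\phi_k}_A\bra{\chi_1}_B\,(E_x\otimes\mathbb{I})\,\ket{\phi_{k'}}_A\ket{\chi_1}_B=\braket{\chi_1}{\chi_1}\,\bra{\phi_k}(E_x\otimes\mathbb{I})\ket{\phi_{k'}}
\end{equation*}
for all $k\neq k'$, and since $\braket{\chi_1}{\chi_1}\neq 0$ this is exactly the statement that $\{E_x\}$ is an orthogonality preserving measurement on party $A_i$ for the set $\mathcal{S}_1$. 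Local stability of $\mathcal{S}_1$ then forces $E_x\propto\mathbb{I}_{A_i}$. Symmetrically, if the party is some $B_j$, I would use the subset $\{\ket{\phi_1}_A\ket{\chi_l}_B:l\in[q]\}$ (which lies entirely in $\mathcal{S}$, since $\ket{\phi_1}_A\ket{\chi_1}_B$ belongs to the first family), whose members share $\ket{\phi_1}_A$, and conclude $E_x\propto\mathbb{I}_{B_j}$ from the local stability of $\mathcal{S}_2$.

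This shows every single-party orthogonality preserving measurement on $\mathcal{S}$ is trivial, i.e. $\mathcal{S}$ is locally stable in the sense of Definition~\ref{def: locally stable}. Taking $\mathcal{S}_1$ and $\mathcal{S}_2$ to be minimum-size locally stable sets of product states, so that $|\mathcal{S}_1|=P({\bm d}_A)$ and $|\mathcal{S}_2|=P({\bm d}_B)$, the set $\mathcal{S}$ is a locally stable set of product states of size $P({\bm d}_A)+P({\bm d}_B)-1$ in the $(m+n)$-party system, which yields $P({\bm d}_A,{\bm d}_B)\le P({\bm d}_A)+P({\bm d}_B)-1$. I do not expect a serious obstacle: the construction is the whole idea, and the verification reduces, party by party, to the hypothesis. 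The one point needing care is precisely the restriction identity above—one must select, for each party, a sub-family of $\mathcal{S}$ on which the \emph{other} tensor factor is held fixed, so that the multipartite orthogonality preserving condition collapses to the single-system condition governing $\mathcal{S}_1$ or $\mathcal{S}_2$; invoking the equivalent dimension criterion \eqref{eq:dimeq} in place of the measurement formulation would give the same conclusion.
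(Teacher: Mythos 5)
Your proposal is correct and takes essentially the same route as the paper's own proof: both glue the two sets along a single shared vector, forming $(\mathcal{S}_1\otimes\{\ket{\chi_1}\})\cup(\{\ket{\phi_1}\}\otimes\mathcal{S}_2)$, and both reduce an orthogonality preserving measurement on a party $A_i$ (resp. $B_j$) to one for $\mathcal{S}_1$ (resp. $\mathcal{S}_2$) by restricting to the sub-family in which the other tensor factor is held fixed. The only difference is cosmetic: you make explicit the mutual-orthogonality check and the fact that the glued set consists of product states, which the paper leaves implicit.
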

		\begin{proof}
			Let $\mathcal{S}_1$   and $\mathcal{S}_2$ be  two 	sets of states in  $\otimes_{i=1}^m\mathcal{H}_{A_i}$ and $\otimes_{j=1}^n\mathcal{H}_{B_j}$ respectively. Define $\mathcal{S}_1\otimes\mathcal{S}_2$ to be the following set of $(m+n)$-parties system $$\{|\psi\rangle\otimes |\phi\rangle\in (\otimes_{i=1}^m\mathcal{H}_{A_i})\otimes (\otimes_{j=1}^n\mathcal{H}_{B_j}) \ \big |\  |\psi\rangle\in \mathcal{S}_1, |\phi\rangle\in \mathcal{S}_2   \}.$$
			Fix any $|\psi_1\rangle\in\mathcal{S}_1$ and   $|\phi_1\rangle\in\mathcal{S}_2$, we define $(\mathcal{S}_1,|\psi_1\rangle)\bigsqcup (\mathcal{S}_2,|\phi_1\rangle)$  to be the   set
			$\big(\{|\psi_1\rangle\} \otimes \mathcal{S}_2\big) \cup\big(\mathcal{S}_1\otimes \{|\phi_1\rangle\}   \big).$
			One finds that the cardinality of $(\mathcal{S}_1,|\psi_1\rangle)\bigsqcup (\mathcal{S}_2,|\phi_1\rangle)$ is just $|\mathcal{S}_1|+ |\mathcal{S}_2|-1$ as 
			$$\big(\{|\psi_1\rangle\} \otimes \mathcal{S}_2\big) \cap\big(\mathcal{S}_1\otimes \{|\phi_1\rangle\}   \big)=\{|\psi_1\rangle\otimes|\phi_1\rangle\}.$$
			
			Now we prove that the set  $(\mathcal{S}_1,|\psi_1\rangle)\bigsqcup (\mathcal{S}_2,|\phi_1\rangle)$ is locally stable as a set of $(m+n)$-parities.  If $A_i$ starts the first orthogonality preserving local  measurement, then   the set of states $\{M_{A_i}\otimes \mathbb{I}_{\hat{A}_i}\otimes \mathbb{I}_{B} |\psi\rangle\otimes |\phi_1\rangle\  \big | \ | \psi\rangle\in \mathcal{S}_1\} $ are mutually orthogonal. This is equivalent to the orthogonality of the set
			$\{M_{A_i}\otimes \mathbb{I}_{\hat{A}_i}| \psi\rangle \  \big | \ |\psi\rangle\in \mathcal{S}_1\} .$  But the locally stableness of $\mathcal{S}_1$ implies that $M_{A_i}^\dagger M_{A_i}\propto \mathbb{I}_{A_i}$. If $B_j$ starts an orthogonality preserving local measurement first, we can also show that it is a trivial measurement similarly.

		\end{proof}
		
		\vskip 8pt
		
		We know that the UPBs are always locally indistinguishable but maybe locally reducible. For example, let $\mathcal{S}$ denote a UPB in $\mathbb{C}^3\otimes \mathbb{C}^3$. Then $\mathcal{S}$ and the   $7$ states $\{|i\rangle|j\rangle \big |\   \ i,j\in\mathbb{Z}_4,  i \text{ or } j =3\}$ form a UPB in 
		$\mathbb{C}^4\otimes \mathbb{C}^4.$
		One can easily check that this UPB is locally reducible. Therefore, it can not be locally stable. In the following, we find that some UPBs  that  are  locally stable.
		
		We have presented that the UPB 
		$\{
		|000\rangle, |+-1\rangle, |1+-\rangle, |-1+\rangle \} $ is   locally stable in $\mathbb{C}^2\otimes \mathbb{C}^2\otimes \mathbb{C}^2$.    More generally, for an $(2n-1)$-qubit  systems, we have a UPB with  $2n$ elements (More details, see Ref. \cite{DiVincenzo03}). For example, the first state $|\Psi_0\rangle$ is $|00\cdots 0\rangle$, the second state  $|\Psi_1\rangle$ is 
		$$|1\rangle|\psi_1\rangle|\psi_2\rangle\cdots|\psi_{n-1}\rangle|\psi_{n-1}^\perp\rangle\cdots |\psi_2^\perp \rangle|\psi_1^\perp\rangle$$ 
		here $|\psi_i\rangle$ and $|\psi_j\rangle$ ($i\neq j$) are chosen to be neither orthogonal nor identical and $|\psi_i\rangle$   is neither orthogonal nor identical to $|0\rangle$ for all $i$. The other states in
		the UPB are obtained by (cyclic) right shifting the second state step by step. For example, the third state  $|\Psi_2\rangle$ is
		$$|\psi_1^\perp\rangle|1\rangle|\psi_1\rangle|\psi_2\rangle\cdots|\psi_{n-1}\rangle|\psi_{n-1}^\perp\rangle\cdots |\psi_2^\perp \rangle.$$
		
		\begin{example}\label{thm: Local_Stable_UPB_Qubits}
			For any integer $n\geq 2$, the above set of UPB with $2n$ elements  is locally stable in $(\mathbb{C}^2)^{\otimes (2n-1)}.$
		\end{example}
		\noindent \emph{Proof.} Without loss of generality, we assume $A_1$ starts with the first measurement $M=M_m^\dagger M_m=\left(\begin{array}{cc}
			m_{00} & m_{01} \\
			m_{10}&  m_{11}
		\end{array}\right)$. Considering that
		$$\langle \Psi_0| M\otimes \mathbb{I}_{\hat{A}_1} |\Psi_1\rangle=0$$
		and $\langle 0 \cdots 0||\psi_1\rangle|\psi_2\rangle\cdots|\psi_{n-1}\rangle|\psi_{n-1}^\perp\rangle\cdots |\psi_2^\perp \rangle|\psi_1^\perp\rangle\neq 0$ as  	$|\psi_i\rangle$  is neither orthogonal nor identical to $|0\rangle$ for each $i$. Using this, one would deduce that
		$m_{01}=m_{10}=0.$ Considering the  third state and the last state, i.e.,
		$|\Psi_2\rangle$ and $|\Psi_{2n-1}\rangle$
		$$
		\begin{array}{rcl}
			|\Psi_2\rangle&=&|\psi_1^\perp\rangle|1\rangle|\psi_1\rangle|\psi_2\rangle\cdots|\psi_{n-1}\rangle|\psi_{n-1}^\perp\rangle\cdots |\psi_2^\perp \rangle,\\
			|\Psi_{2n-1}\rangle&=&|\psi_1\rangle|\psi_2\rangle\cdots|\psi_{n-1}\rangle|\psi_{n-1}^\perp\rangle\cdots |\psi_2^\perp \rangle|\psi_1^\perp\rangle|1\rangle.	 	
		\end{array}	 
		$$
		One notice that only the first part is orthogonal. With this one can show that
		$$\langle\psi_1|M|\psi_1^\perp\rangle=0.$$
		Let $|\psi_1\rangle =\alpha|0\rangle +\beta|1\rangle $, then $|\psi_1^\perp\rangle =\hat{\beta}|0\rangle -\hat{\alpha}|1\rangle $ where $\alpha,\beta\neq 0$. Substitute the two states to the above equation, one would obtain that $m_{00}=m_{11}.$
		
		Therefore, one conclude that $A_1$ can only start with a trivial measurement. By the symmetry, all the other parts can only start with a trivial measurement.
		\qed
		
		\begin{corollary}\label{cor:Local_Stable_boun d_qubits}
			Let $n\geq 5$ be an  integer.  Then there exists a locally stable set of product states with cardinality $n+1$  in $n$-qubit  systems $(\mathbb{C}^2)^{\otimes n}.$ Consequence, $$P(\underbrace{2,2,\cdots,2}_{n})\leq n+1.$$
		\end{corollary}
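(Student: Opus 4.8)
The plan is to assemble the desired set from two ingredients already in hand: the explicit locally stable UPB of Example~\ref{thm: Local_Stable_UPB_Qubits}, which lives on an odd number of qubits, and the gluing construction of Proposition~\ref{pro:Nonlocal_Tensor}, which merges two locally stable sets across a tensor split. I would organize the argument by the parity of $n$.

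First I would dispose of odd $n\geq 5$. Writing $n=2k-1$ with $k=(n+1)/2\geq 3\geq 2$, Example~\ref{thm: Local_Stable_UPB_Qubits} directly supplies a locally stable set of product states with $2k=n+1$ elements in $(\mathbb{C}^2)^{\otimes(2n-1)}=(\mathbb{C}^2)^{\otimes n}$, so $P(\underbrace{2,\cdots,2}_{n})\leq n+1$ is immediate. More generally this records $P(\underbrace{2,\cdots,2}_{m})\leq m+1$ for every odd $m\geq 3$ (the case $m=3$ being the four-state UPB $\{|000\rangle,|{+}{-}1\rangle,|1{+}{-}\rangle,|{-}1{+}\rangle\}$ discussed above), a fact I would reuse for the even case. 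For even $n\geq 6$, I would split the $n$ qubits into two odd blocks via $n=3+(n-3)$, noting that $n-3\geq 3$ is odd, and apply Proposition~\ref{pro:Nonlocal_Tensor} with ${\bm d}_A=(2,2,2)$ and ${\bm d}_B=(\underbrace{2,\cdots,2}_{n-3})$. The three-qubit factor contributes a locally stable set of size $4$, and the $(n-3)$-qubit factor is handled by the odd-case record, giving a locally stable set of size $(n-3)+1=n-2$. Since both building blocks consist of product states, so does their glue, and Proposition~\ref{pro:Nonlocal_Tensor} yields a locally stable set of product states of cardinality $4+(n-2)-1=n+1$, as claimed. Together the two cases cover all $n\geq 5$.

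The argument is essentially bookkeeping, so the only point needing care is the range of validity of the hypotheses. The even-case decomposition $n=3+(n-3)$ requires $n-3\geq 3$, i.e. $n\geq 6$, so that the parameter $l=(n-2)/2$ fed into Example~\ref{thm: Local_Stable_UPB_Qubits} satisfies $l\geq 2$; the odd case covers $n=5,7,9,\dots$; and these two ranges together give precisely $n\geq 5$. I would also remark that $n=4$ is legitimately excluded, since $4$ cannot be written as a sum of two odd integers each at least $3$, which is why the statement begins at $n\geq 5$. I do not expect a genuine obstacle here beyond confirming that each invocation of Example~\ref{thm: Local_Stable_UPB_Qubits} and Proposition~\ref{pro:Nonlocal_Tensor} stays within its stated hypotheses.
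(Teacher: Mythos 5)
Your proof is correct, and it takes a route that differs from the paper's in a way worth recording. The paper proves the corollary by writing $n=3x+5y$ with $x,y\in\mathbb{N}$ and gluing, via iterated applications of Proposition~\ref{pro:Nonlocal_Tensor}, $x$ copies of the $3$-qubit UPB (size $4$) and $y$ copies of the $5$-qubit UPB (size $6$), obtaining $4x+6y-(x+y-1)=n+1$ states; it uses Example~\ref{thm: Local_Stable_UPB_Qubits} only through these two smallest members of the family. You instead use Example~\ref{thm: Local_Stable_UPB_Qubits} at full strength --- for odd $n$ it already produces the desired $(n+1)$-element set with no gluing at all --- and invoke Proposition~\ref{pro:Nonlocal_Tensor} exactly once, for even $n$, via the split $n=3+(n-3)$. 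Besides being more economical, your argument actually closes a gap in the paper's own proof: $7$ is the Frobenius number of the pair $\{3,5\}$, so $n=7$ cannot be written as $3x+5y$ with nonnegative integers $x,y$, and the paper's decomposition simply does not exist in that case, whereas your odd-$n$ branch handles $n=7$ directly (so the corollary itself is still true, with your proof serving as the repair). Two small points of hygiene: in your odd case you wrote $(\mathbb{C}^2)^{\otimes(2n-1)}$ where you meant $(\mathbb{C}^2)^{\otimes(2k-1)}$ with $k=(n+1)/2$, and in your even case the parameter fed to Example~\ref{thm: Local_Stable_UPB_Qubits} is $(n-2)/2\geq 2$, exactly as you noted, so every invocation indeed stays within the stated hypotheses.
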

		
		One can check that any $n\geq 5$ can be written as the form $n=3x+5y$ for some $x,y\in\mathbb{N}.$ Applying Proposition \ref{pro:Nonlocal_Tensor} to  3-qubit $\mathrm{UPB}$ (with 4 elements) and 5-qubit $\mathrm{UPB}$   (with 6 elements)  by   $x$ and $y$ times respectively, one obtain a locally stable set with $$4\times x+6\times y-(x+y-1)=3x+5y+1=n+1$$ elements in $n$-qubit  systems. Note that the minimum size of UPB in   $n$-qubit system is strictly larger than $n+1$ when $n\geq 4$ is even. 
		
		In fact, we can also reduce this bound  by subsets of   UPB.
		We  rewrite the last $N:=2n-1$ states of the UPB constructed above as follows:
		\begin{equation}\label{eq: qubits_UPB}
			\begin{array}{l}
				|\Psi_1\rangle=	|1\rangle|\psi_1\rangle|\psi_2\rangle\cdots|\psi_{n-1}\rangle|\psi_{n-1}^\perp\rangle\cdots |\psi_2^\perp \rangle|\psi_1^\perp\rangle,\\
				|\Psi_2\rangle=	|\psi_1^\perp\rangle|1\rangle|\psi_1\rangle|\psi_2\rangle\cdots|\psi_{n-1}\rangle|\psi_{n-1}^\perp\rangle\cdots |\psi_2^\perp \rangle,\\
				|\Psi_3\rangle=|\psi_2^\perp \rangle|\psi_1^\perp\rangle|1\rangle|\psi_1\rangle|\psi_2\rangle\cdots|\psi_{n-1}\rangle|\psi_{n-1}^\perp\rangle\cdots,\\
				\vdots\\
				
				|\Psi_N\rangle=	|\psi_1\rangle|\psi_2\rangle\cdots|\psi_{n-1}\rangle|\psi_{n-1}^\perp\rangle\cdots |\psi_2^\perp \rangle|\psi_1^\perp\rangle |1\rangle.\\
			\end{array}
		\end{equation}
		
		\noindent{\bf Observation.} These $2n-1$ states are pairwise orthogonal. So there are at least $\binom{2n-1}{2}$ pairs of orthogonal relation. On the other hand, each party has exactly $(n-1)$ pairs of orthogonal relations (i.e., $\langle  \psi_i|\psi_i^\perp\rangle=0, i=1,\cdots,n-1$) and there are $2n-1$ parties.
		As $\binom{2n-1}{2}=(2n-1)(n-1)$, the orthogonal relations of the entire states are   arising from  just one party and all the other parties are non-orthogonal. That is, if  $|\Psi_j\rangle=\otimes_{i=1}^N|\Psi_j^{A_i}\rangle,|\Psi_k\rangle=\otimes_{i=1}^N|\Psi_k^{A_i}\rangle$ where $j,k\in\{1,2,\cdots,N\}$ and $j\neq k$, then there is only one $i\in[N]$ such that $\langle \Psi_j^{A_i}|\Psi_k^{A_i} \rangle =0.$ Deleting $k$ states from the $(2n-1)$ states would removing at most $k$-pairs of orthogonal relations for each party. If we delete any $(n-3)$ states from the above set defined by Eq. \eqref{eq: qubits_UPB} (we arrive a set  with $(n+2)$ elements), there are at least two pairs of orthogonal relations for each party.

		\begin{example}\label{thm: multi_qubit_less_UPB}
			Let  $n\geq 3$ be an integer.  Any set $\mathcal{S}$  with  $(n+2)$   states from those defined by  Eq. \eqref{eq: qubits_UPB} is locally stable in ${(\mathbb{C}^2)}^{\otimes(2n-1)}$.
		\end{example}
		
		\noindent \emph{Proof.} Assume that the $A_\delta$-party start with an orthogonality preserving local measurement $\{\pi_m=M_m^\dagger M_m\}.$ As our observation above, there are two pairs of states  
		$$\begin{array}{ll}
			|\psi_k\rangle_{\delta}|\phi_k\rangle_{\hat{\delta}},& |\psi_k^\perp\rangle_{\delta}|\theta_k\rangle_{\hat{\delta}};\\[1mm]
			|\psi_l\rangle_{\delta}|\phi_l\rangle_{\hat{\delta}},& |\psi_l^\perp\rangle_{\delta}|\theta_l\rangle_{\hat{\delta}}
		\end{array}
		$$
		in  $\mathcal{S}$ such that ${}_{\hat{\delta}}\langle \phi_k|\theta_k\rangle_{\hat{\delta}}\neq 0,  {}_{\hat{\delta}}\langle \phi_l|\theta_l\rangle_{\hat{\delta}}\neq 0.$
		As $\pi_m$ is an orthogonality preserving map, we have the following relations
		\begin{equation}
			\begin{array}{l}
				{}_\delta\langle\psi_k|{}_{\hat{\delta}}\langle \phi_k| \pi_m\otimes \mathbb{I}_{\hat{A}_\delta} |\psi_k^\perp\rangle_{\delta}|\theta_k\rangle_{\hat{\delta}}=0,\\[1mm]
				{}_\delta\langle\psi_l|{}_{\hat{\delta}}\langle \phi_l| \pi_m\otimes \mathbb{I}_{\hat{A}_\delta} |\psi_l^\perp\rangle_{\delta}|\theta_l\rangle_{\hat{\delta}}=0.		
			\end{array}	
		\end{equation}
		As   ${}_\delta\langle\psi_k|\psi_k^\perp\rangle_\delta=0$ and  ${}_\delta\langle\psi_l|\psi_l^\perp\rangle_\delta=0$, by our observation, we have  ${}_{\hat{\delta}}\langle\phi_k|\theta_k^\perp\rangle_{\hat{\delta}}\neq 0$ and  ${}_{\hat{\delta}}\langle\phi_l|\theta_l^\perp\rangle_{\hat{\delta}}\neq 0.$ Therefore, we have the following relations
		\begin{equation}\label{eq: UPB_Less_Orthogonal_Relation}
			{}_\delta\langle\psi_k| \pi_m  |\psi_k^\perp\rangle_{\delta} =0,\ 	{}_\delta\langle\psi_l| \pi_m  |\psi_l^\perp\rangle_{\delta} =0.
		\end{equation}
		The first equality implies that $\pi_m$ is diagonal under the basis $\{ |\psi_k\rangle_\delta, |\psi_k^\perp\rangle_\delta\}$. Suppose that $\pi_m=x |\psi_k\rangle_{\delta}\langle \psi_k|+ y|\psi_k^\perp\rangle_{\delta}\langle \psi_k^\perp|$ and $|\psi_l\rangle_\delta=\alpha|\psi_k\rangle_{\delta}+\beta|\psi_k^\perp\rangle_{\delta}$ where $\alpha,\beta\neq 0$ by assumption. Substitute the expressions $|\psi_l\rangle_\delta$ and  $|\psi_l^\perp\rangle_\delta$ to the second relation of Eq. \eqref{eq: UPB_Less_Orthogonal_Relation}, we obtain that 
		$$ \alpha\beta x-\alpha\beta y=0.$$
		As $\alpha\beta\neq0$, we deduce $x=y$. Hence, the measurement  element $\pi_m\propto \mathbb{I}_2$. \qed
		\vskip 5pt

		\begin{corollary}\label{cor:Local_Stable_boun d_qubits_less}
			Let $n\geq 5$ be an integer.  Then there exist  a locally stable set of product states in $n$-qubit systems $(\mathbb{C}^2)^{\otimes n}$ with cardinality $\mathcal{N}(n)$  where 
			$\mathcal{N}(n)$ is defined as follows.
			Consequence, $$P(\underbrace{2,2,\cdots,2}_{n})\leq\mathcal{N}(n):=
			\begin{cases}
				\frac{n+1}{2}+2, &  \text{ odd } n\geq 5; \\[2mm]
				\frac{n}{2}+4, &  \text{ even } n\geq 10.
			\end{cases}. $$
		\end{corollary}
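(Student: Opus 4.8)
The plan is to treat the parity of $n$ separately and in each case assemble the required set from the subset construction of Example \ref{thm: multi_qubit_less_UPB}, gluing pieces together by Proposition \ref{pro:Nonlocal_Tensor} when the number of qubits is even. Since Example \ref{thm: multi_qubit_less_UPB} produces sets of product states and the $\bigsqcup$-construction underlying Proposition \ref{pro:Nonlocal_Tensor} only forms tensor products of the given states, every set produced below automatically consists of product states, as the corollary demands.

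For odd $n \geq 5$ I would invoke Example \ref{thm: multi_qubit_less_UPB} directly. Writing $n = 2m-1$ with $m = \frac{n+1}{2}$, the hypothesis $n \geq 5$ is exactly $m \geq 3$, so the example applies to the $(2m-1)$-qubit system and yields a locally stable set in $(\mathbb{C}^2)^{\otimes n}$ of cardinality $m+2 = \frac{n+1}{2}+2 = \mathcal{N}(n)$. Nothing further is needed in this case.

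For even $n \geq 10$, Example \ref{thm: multi_qubit_less_UPB} cannot be applied directly because it only builds sets on an odd number of qubits. The plan is therefore to split the $n$ parties into two odd blocks and tensor the corresponding locally stable sets. Concretely I would write $n = 5 + (n-5)$: both $5$ and $n-5$ are odd, and $n \geq 10$ guarantees $n-5 \geq 5$, so Example \ref{thm: multi_qubit_less_UPB} applies to each block, giving a locally stable set of size $\frac{5+1}{2}+2 = 5$ on the first five qubits and one of size $\frac{(n-5)+1}{2}+2 = \frac{n-4}{2}+2$ on the remaining $n-5$ qubits. Feeding these into Proposition \ref{pro:Nonlocal_Tensor} produces a locally stable set on all $n$ qubits of cardinality $5 + \left(\frac{n-4}{2}+2\right) - 1 = \frac{n}{2}+4 = \mathcal{N}(n)$.

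The argument is essentially bookkeeping, so I expect no genuine obstacle beyond checking two constraints. First, each block must have at least five qubits so that the condition $m \geq 3$ of Example \ref{thm: multi_qubit_less_UPB} is met; this is precisely why the even case is restricted to $n \geq 10$ (the split already fails at $n = 8$, where $n-5 = 3$). Second, $\frac{n-4}{2}$ must be an integer, which holds because $n$ is even. Any other decomposition of $n$ into two odd summands that are each at least $5$ leads to the same cardinality $\frac{n}{2}+4$, so the choice $5+(n-5)$ is only for definiteness; the single arithmetic identity for the combined cardinality is the one piece worth writing out carefully.
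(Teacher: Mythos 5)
Your proposal is correct and follows exactly the argument the paper intends (the paper leaves this corollary's proof implicit): the odd case is Example \ref{thm: multi_qubit_less_UPB} applied directly with $n=2m-1$, and the even case glues two odd blocks, each of at least five qubits, via Proposition \ref{pro:Nonlocal_Tensor}, which is why the even case starts at $n\geq 10$. Your arithmetic $5+\bigl(\frac{n-4}{2}+2\bigr)-1=\frac{n}{2}+4$ and the observation that the construction preserves the product-state structure are both right, so nothing is missing.
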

		
		\noindent{\bf \emph{Remark}:} In the   proof of Example \ref{thm: multi_qubit_less_UPB}, we only use the property that for each of the $(2n-1)$ party $A_\delta$, there are two pairs of orthogonal relations  
		$${}_\delta\langle\psi_k|\psi_k^\perp\rangle_\delta={}_\delta\langle\psi_l|\psi_l^\perp\rangle_\delta=0$$
		here $|\Psi_k\rangle,|\Psi_l\rangle\in\mathcal{S}$. Let $N:=2n-1$ (here we assume $N> 36$). In the appendix, we will show that there exists a subset $\mathcal{T}$ (whose size is $|\mathcal{T}|=3\lceil \sqrt{N}\rceil$  of states  defined by  Eq. \eqref{eq: qubits_UPB}   with this aforementioned property. Using this results and that from  Theorem \ref{thm:ProductBasis_Bounds}, we have
		$$\frac{-1+\sqrt{1+12N}}{2}\leq P(\underbrace{2,2,\cdots,2}_{N})\leq 3\lceil \sqrt{N}\rceil $$
		for $N=2n-1.$ Note that this upper bound is much smaller than that of Corollary \ref{cor:Local_Stable_boun d_qubits_less} when $N$ is large.

		For the two-qutrit systems, it is well known that there is a UPB with $5$ states  in $\mathbb{C}^3\otimes \mathbb{C}^3.$  In fact, one can show that it is locally stable. Therefore, by Theorem \ref{thm:ProductBasis_Bounds}, we could deduce $P(3,3)=5.$  Moreover, by   Theorem \ref{thm:ProductBasis_Bounds},  we obtain $P(3,3,3)\geq 6$. On the other hand, there is a UPB with $7$ elements in $\mathbb{C}^3\otimes \mathbb{C}^3\otimes\mathbb{C}^3$ (see Ref. \cite{DiVincenzo03} for more details). By using Matlab, we can check the following statement. 
		\begin{example}\label{ex:UPB333}
			Let $\mathbf{Sep}$ denote the UPB in $\mathbb{C}^3\otimes \mathbb{C}^3\otimes \mathbb{C}^3$ defined as  $|p_i\rangle=|u_i\rangle\otimes |v_i\rangle\otimes|w_i\rangle $ where
			$$|{u}_i\rangle:=\mathcal{N}(\cos \frac{2\pi i}{7}|0\rangle+\sin \frac{2\pi i}{7}|1\rangle+ h|2\rangle), \ i=0,1,\cdots,6 $$
			and $v_i=u_{2i \mod  7}$, $w_i=  v_{3i \mod  7}$ with $h=\sqrt{-\cos \frac{4\pi i}{7}}$ and $\mathcal{N}=1/\sqrt{1-\cos \frac{4\pi i}{7}}$. Then any six elements of the  set $\mathbf{Sep}$   is locally stable.
		\end{example}
		
		So $P(3,3,3)=6$.	Using the results $P(2,2)=5$ and $P(3,3,3)=6$,  by Proposition \ref{pro:Nonlocal_Tensor}, we can obtain an upper bound on $P(\underbrace{3,3,\cdots,3}_{n})$ as follows.
		
		\begin{corollary}\label{cor:Local_Stable_boun d_qutrits}
			Let $n\geq 2$ be an integer.  Then there exists a locally stable set of product states in $n$-qutrit  systems $(\mathbb{C}^3)^{\otimes n}$ with cardinality less than $\frac{5}{3} n+2$.  Consequence, $$P(\underbrace{3,3,\cdots,3}_{n})\leq \frac{5}{3} n+2.$$
		\end{corollary}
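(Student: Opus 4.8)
The plan is to construct the required sets by iterating Proposition~\ref{pro:Nonlocal_Tensor} on the two smallest available locally stable blocks: the two-qutrit set of size $P(3,3)=5$ and the three-qutrit set of size $P(3,3,3)=6$ supplied by Example~\ref{ex:UPB333}. The key bookkeeping fact is that gluing $k$ locally stable sets of cardinalities $c_1,\dots,c_k$ by repeated application of Proposition~\ref{pro:Nonlocal_Tensor} yields a locally stable set of cardinality $\sum_{j=1}^{k} c_j-(k-1)=\sum_{j=1}^{k}(c_j-1)+1$. Hence, if I tile the $n$ qutrits into $a$ blocks of two qutrits and $b$ blocks of three qutrits with $2a+3b=n$, the resulting set is locally stable with cardinality $4a+5b+1$. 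Writing $4a+5b=2(2a+3b)-b=2n-b$, this equals $2n-b+1$, so the whole problem reduces to the arithmetic of choosing a valid decomposition $2a+3b=n$ ($a,b\ge 0$ integers) that makes $b$ as large as possible; the three-qutrit block is the efficient one because its reduced rate $(6-1)/3=5/3$ beats the two-qutrit rate $(5-1)/2=2$.

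First I would dispose of the two easy residue classes modulo $3$. For $n\equiv 0$ I take $a=0$, $b=n/3$, giving cardinality $\tfrac{5}{3}n+1$; for $n\equiv 2$ I take $a=1$, $b=(n-2)/3$, giving cardinality $\tfrac{5}{3}n+\tfrac{5}{3}$. Both are strictly smaller than $\tfrac{5}{3}n+2$, so in these cases the claimed bound on $P(\underbrace{3,3,\cdots,3}_{n})$ follows at once.

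The main obstacle is $n\equiv 1\pmod 3$. A short congruence check shows that $2a+3b=n$ with $n\equiv 1$ forces $a\equiv 2\pmod 3$, hence $a\ge 2$; the optimal choice $a=2$, $b=(n-4)/3$ then gives cardinality $\tfrac{5}{3}n+\tfrac{7}{3}$, which \emph{exceeds} the target $\tfrac{5}{3}n+2$ by $\tfrac{1}{3}$ (that is, by one full unit, since cardinalities are integers). So pure two-and-three tiling cannot reach the bound here, and this is the delicate step. The way I would close the gap is to introduce one extra ingredient: a locally stable product set in $(\mathbb{C}^3)^{\otimes 4}$ of cardinality at most $8$ (it need not be a UPB). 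Granting such a four-qutrit block, for $n=3c+1$ I would tile with $c-1$ three-qutrit blocks together with this one four-qutrit block, obtaining a locally stable set of cardinality at most $5(c-1)+(8-1)+1=5c+3$, which is strictly below $\tfrac{5}{3}n+2=5c+\tfrac{11}{3}$; the smallest instance $n=4$ is handled by the four-qutrit block by itself.

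The hard part, therefore, is to produce that four-qutrit block. I would exhibit an explicit product set $\mathcal{S}\subset(\mathbb{C}^3)^{\otimes 4}$ of size at most $8$ and verify its local stability through the criterion \eqref{eq:dimeq}, i.e. by checking $\mathrm{dim}_{\mathbb{C}}[\mathcal{D}_{\hat{A}_i|A_i}(\mathcal{S})]=d_i^2-1=8$ for each of the four parties, exactly in the spirit of Example~\ref{ex:UPB333} (a short numerical verification would suffice). Theorem~\ref{thm:ProductBasis_Bounds} only forces such a set to have size at least $\tfrac{-1+\sqrt{1+4\cdot 32}}{2}>5$, i.e. at least $6$, so there is ample room between $6$ and $8$; pinning down one concrete locally stable four-qutrit set in this range is where the real effort goes. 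Once it is in hand, all three residue classes yield cardinality strictly less than $\tfrac{5}{3}n+2$, and the inequality $P(\underbrace{3,3,\cdots,3}_{n})\le \tfrac{5}{3}n+2$ follows.
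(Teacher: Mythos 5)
Your proposal follows the same skeleton as the paper's proof: tile the $n$ qutrits into two-qutrit blocks (the $5$-element locally stable UPB) and three-qutrit blocks (the $6$-element subset from Example~\ref{ex:UPB333}), glue them with Proposition~\ref{pro:Nonlocal_Tensor}, and count $4a+5b+1$ elements when $n=2a+3b$. Where you depart from the paper is in the arithmetic, and there your analysis is the more careful one: the paper simply asserts $4x+5y+1\le \frac{5}{3}n+2$, but as you observe this inequality holds only when $x\le 1$, and for $n\equiv 1 \pmod 3$ every decomposition $n=2x+3y$ forces $x\equiv 2\pmod 3$, hence $x\ge 2$ and a count of at least $\frac{5}{3}n+\frac{7}{3}$. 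So for $n\equiv 1\pmod 3$ (e.g.\ $n=4$, where the only tiling gives $5+5-1=9>\frac{26}{3}$) the paper's own argument does not establish the stated bound; you have located a real flaw that the paper glosses over.

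However, your repair is itself incomplete. Everything for the residue class $n\equiv 1\pmod 3$ hinges on the existence of a locally stable set of product states in $(\mathbb{C}^3)^{\otimes 4}$ with at most $8$ elements, and you never produce one: you only note that Theorem~\ref{thm:ProductBasis_Bounds} leaves room for such a set (in fact the sharper form $(|\cS|-1)|\cS|\ge 32$ forces size at least $7$, so the window is just $\{7,8\}$) and defer to a numerical check in the spirit of Example~\ref{ex:UPB333}. That existence claim is precisely the hard content --- Example~\ref{ex:UPB333} itself required an explicit UPB plus Matlab verification --- so as written your argument proves the corollary only for $n\equiv 0,2\pmod 3$ and leaves $n\equiv 1\pmod 3$ conditional on an unproven construction. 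To be fair, the paper's proof leaves the same residue class unproven, only silently; your version at least makes the missing ingredient explicit.
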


		One can check that any $n\geq 2$ can be written as the form $n=2x+3y$ for some $x,y\in\mathbb{N}.$ Applying Proposition \ref{pro:Nonlocal_Tensor} to  2-qutrit  $\mathrm{UPB}$ (with 5 elements) and subset of 3-qutrits $\mathrm{UPB}$  (with 6 elements) by   $x$ and $y$ times respectively, one obtain a locally stable set with $$5\times x+6\times y-(x+y-1)=4x+5y+1\leq \frac{5}{3} n+2$$ elements in $n$-qutrit   systems.

		Note that there is a trivial lower bound  $$f(d_1,\cdots,d_n):=1+\sum_{i=1}^n(d_i-1)$$ on the size of a UPB in $\bigotimes_{i=1}^n\mathbb{C}^{d_i}$ (see Ref.   \cite{DiVincenzo03}  for more details).  From Corollary \ref{cor:Local_Stable_boun d_qubits} and  \ref{cor:Local_Stable_boun d_qutrits}, we could deduce that 
		$$\begin{array}{c}
			P(\underbrace{2,2,\cdots,2}_{n})<f(\underbrace{2,2,\cdots,2}_{n}),\\
			P(\underbrace{3,3,\cdots,3}_{n})<f(\underbrace{3,3,\cdots,3}_{n})
		\end{array}
		$$
		for $n\geq 10.$ 
		
		Although the UPBs may not be locally stable in general, we conjecture that those UPBs (even  some of it subsets) with minimum size are locally stable.

		\vskip 30pt	
		
		\section{Conclusion and Discussion}\label{sec:Conclusion} In this paper, we studied the locally stableness of a set of product states in multipartite quantum systems. We obtained a lower bound on the size of locally stable set of product states and pointed out that there may some gap between  the size of locally stable set of product states and of the entangled states. Then we obtained a property on the upper bound of smallest size of locally stable set of product states in some give multipartite systems. Via some examples in $n$ qubit systems and $n$ qutrit systems, we pointed out that not all UPBs are locally stable but those UPBs with smallest size might be. Surprisingly, a much small subset of UPB may also present this kind of nonlocality. 
		
		There are also some questions left to be considered. We conjecture that  those UPBs with smallest size for some given multipartite systems are always locally stable.  It is also interesting to determine the exact value of the smallest size of locally stable set of product states for a given multipartite systems. We hope that the study of local stable set of product states will enrich our understanding of the quantum nonlocality.

		\vskip 16pt

		\vspace{2.5ex}
		
		\noindent{\bf Acknowledgments}\, \,  Wang  is supported  by  National  Natural  Science  Foundation  of  China  (Grant No. 11901084),  the Basic and Applied Basic
		Research Funding Program of Guangdong Province (Grant No. 2019A1515111097),   
		the Research startup funds of DGUT (GC300501-103). 
		
		\vskip 10pt
		
		{
			$${\text{\textbf{APPENDIX  }}}$$
		}
		\begin{figure*}[ht]
			\centering
			\includegraphics[scale=0.36]{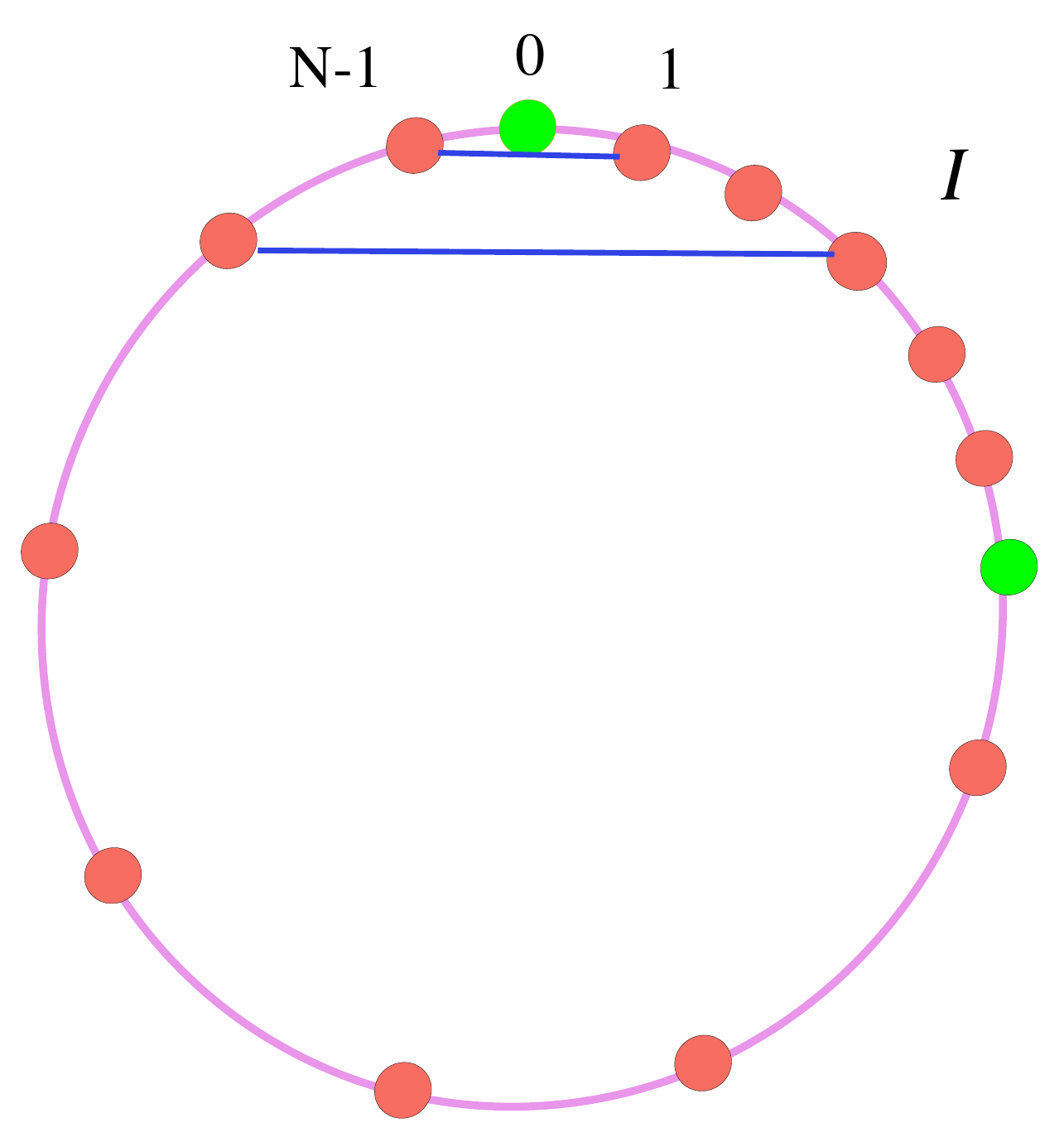}	\includegraphics[scale=0.41]{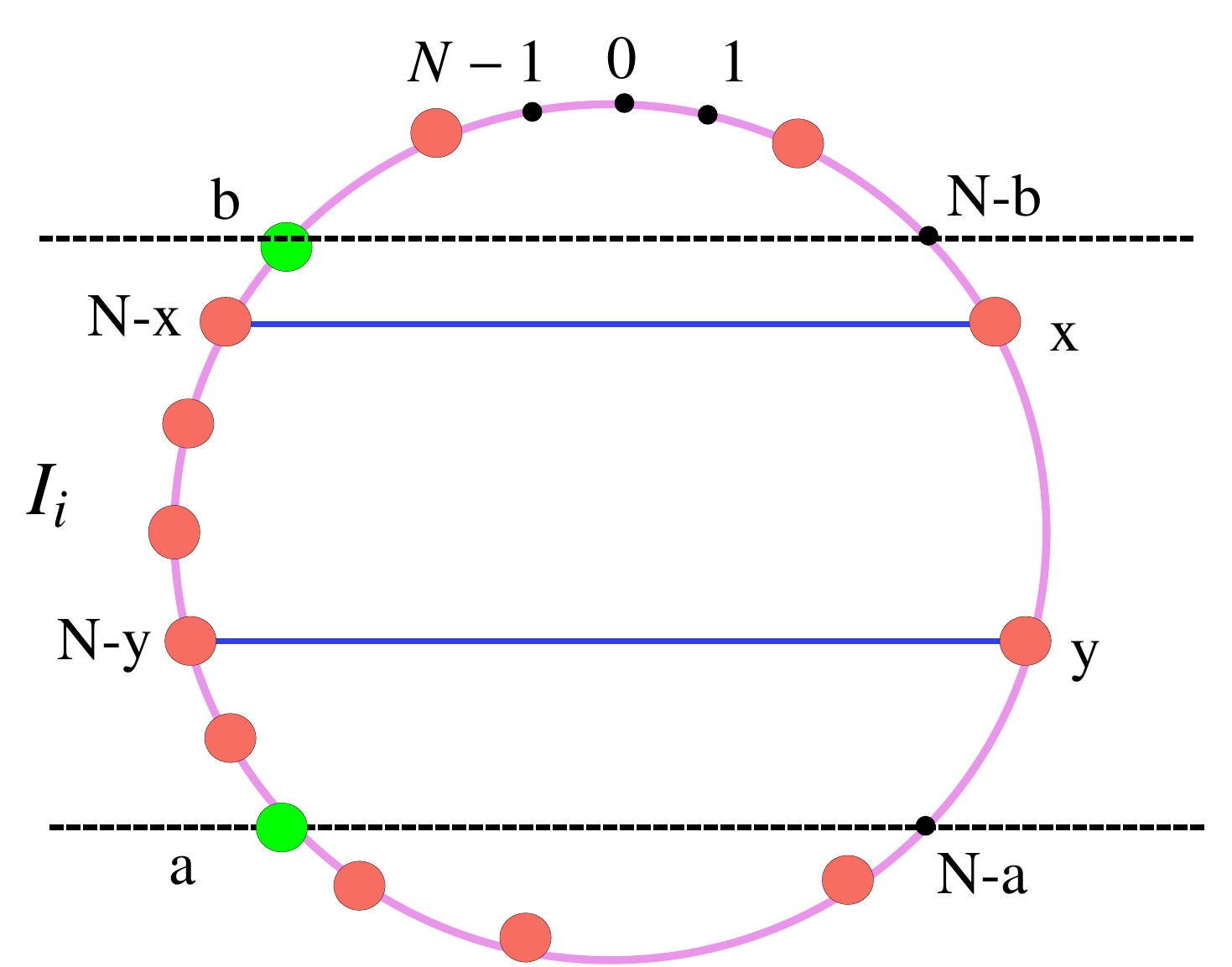}
			\includegraphics[scale=0.37]{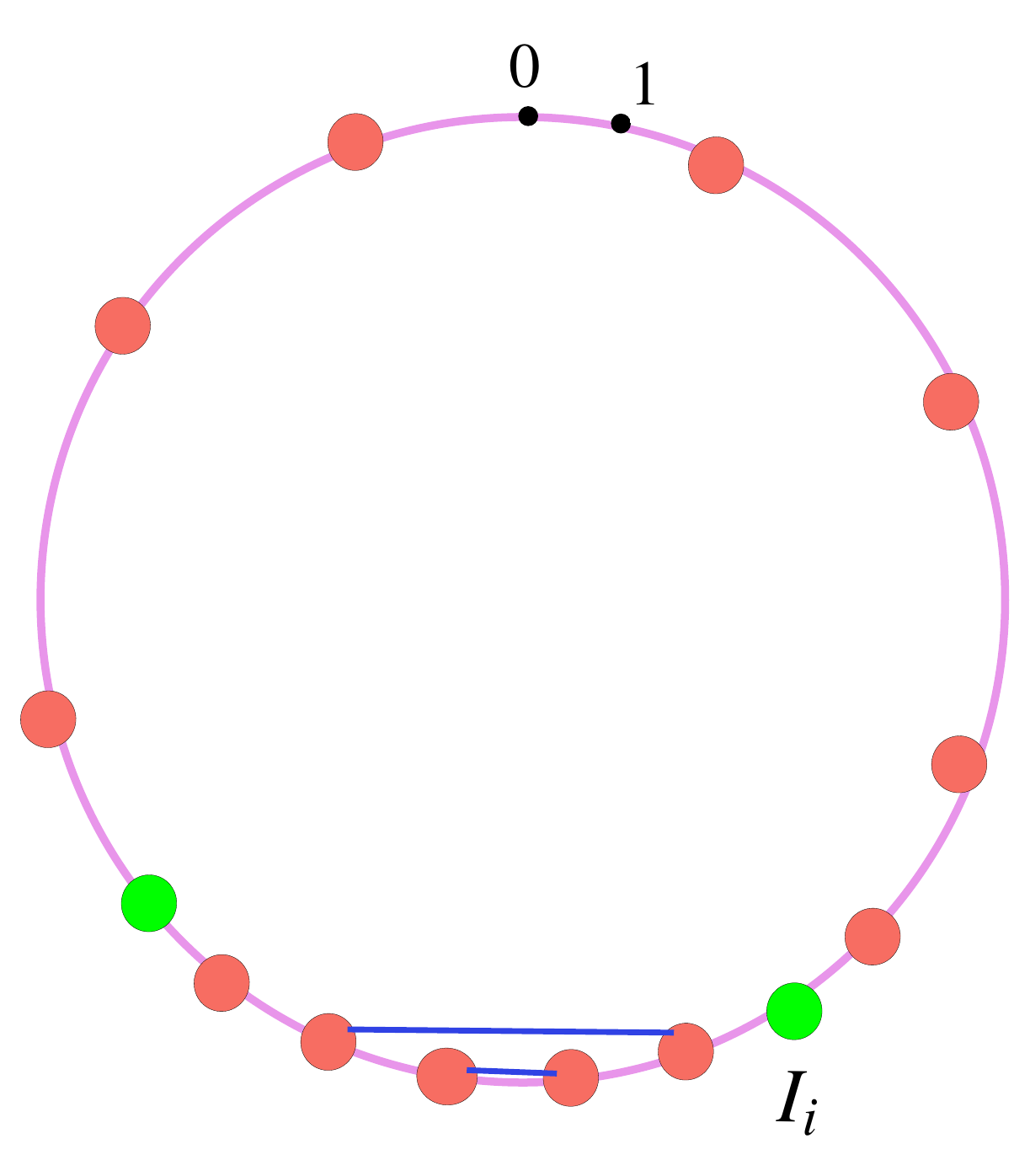}
			\caption{ An intuitive  view of the structure of $T_i$ and $I_i$ in the appendix. The dots between the two green dots (including the two green dots) are corresponding to the set $I_i$. The green and red dots   are corresponding to the set $T_i$. }\label{fig:structureT}
		\end{figure*}
		
		{\bf Construction of $\mathcal{T}$ in the remark of Corollary  \ref{cor:Local_Stable_boun d_qubits_less}:} Denote $\mathcal{T}\subseteq \{|\Psi_i\rangle\}_{i=1}^N$  the set we want to construct.
		We define a map  $\mathcal{F}$ from the cyclic group $\mathbb{Z}_N$ (where $N=2n-1$) to the set
		$$ \{|1\rangle, |\psi_1^\perp\rangle,|\psi_2^\perp\rangle,\cdots, |\psi_{n-1}^\perp\rangle,|\psi_{n-1},\rangle\cdots, |\psi_1 \rangle\}$$
		by setting 
		$$   \mathcal{F}(0)=|1\rangle, \ \  \mathcal{F}(i)=|\psi_i^\perp\rangle,\ \  \mathcal{F}(N-i)=|\psi_i\rangle$$
		for $i=1,\cdots,n-1$. Clearly, $\mathcal{F}$ is a bijection.
		\begin{equation}\label{eq: qubits_UPB_rename}
			\begin{array}{l}
				|\Psi_1\rangle=	\mathcal{F}(0)\mathcal{F}(N-1)\mathcal{F}(N-2)\cdots\mathcal{F}(2)\mathcal{F}(1),\\
				|\Psi_2\rangle=	\mathcal{F}(1)\mathcal{F}(0)\mathcal{F}(N-1)\cdots\mathcal{F}(3)\mathcal{F}(2),\\
				|\Psi_3\rangle=\mathcal{F}(2)\mathcal{F}(1)\mathcal{F}(0) \cdots\mathcal{F}(4)\mathcal{F}(3),\\
				\vdots\\
				
				|\Psi_{N}\rangle=	\mathcal{F}(N-1)\mathcal{F}(N-2)\mathcal{F}(N-3)\cdots\mathcal{F}(1)\mathcal{F}(0).\\
			\end{array}
		\end{equation}
		Note that the state $|\Psi_i\rangle$ can be completely determined by its first part. We assume that the set of the first part of $\mathcal{T}$ is 
		$$ P_1:=\{\mathcal{F}(t) \mid \ \ t\in T\},$$
		where $T\subseteq \mathbb{Z}_N$ and the cardinality of $T$ and $\mathcal{T}$ are equal, i.e., $|T|=|\mathcal{T}|.$ For each $i=2,\cdots,N,$ we define $$T_i:=T-(i-1):=\{t-(i-1) \mod N  \mid t\in T\}.$$ Then the $i$-th part of the states in $\mathcal{T}$ is just
		$$ P_i:=\{\mathcal{F}(t) \mid \ \ t\in T_i\}, \ i=2,\cdots,N.$$
		We want to find a small set $\mathcal{T}$ such that each part $P_i$ ($i=1,\cdots, n$) of $\mathcal{T}$ contains at least two pairs of orthogonal states.
		Let $i,j\in\mathbb{Z}_N$ then $\mathcal{F}(i)$ is orthogonal to $\mathcal{F}(j)$ if and only if $i,j\neq 0$ and $i+j\equiv 0 \mod N$.
		
		Let $\ell:=\lceil \sqrt{N}\rceil$. Define  $I=\{ 0,1,2,\cdots,2\ell+1\}$ and
		$$T=I\cup \{ k\ell \mid 3\leq k \leq  \ell -1\} \cup\{N-1\}.$$
		Then the cardinality of $T$ is $3\ell.$  For each $i=1,2,\cdots,N$, set 
		$$I_i:=I-(i-1):=\{t-(i-1) \mod N  \mid t\in I\}.$$
		We separate our discussion into the following cases.
		\begin{enumerate}
			\item [\rm(1)]  $i=0.$   As $0<N-(\ell-1)\ell\leq \ell$, we have $1,N-1,   (\ell-1)\ell, N-(\ell-1)\ell\in T$  with the property (see the left figure of Fig. \ref{fig:structureT}  )
			$$\begin{array}{l} 1+N-1=N\equiv 0 \mod N,\\    (\ell-1)\ell+N-(\ell-1)\ell=N\equiv 0\mod N.
			\end{array}
			$$
			\item [\rm(2)]  $1\leq i\leq 2\ell-1.$   In these cases, we have $1,N-1,  2 , N-2\in T_i$ with the property
			$$\begin{array}{l} 1+N-1=N\equiv 0 \mod N,\\    2+N-2=N\equiv 0\mod N.
			\end{array}
			$$
			\item [\rm(3)]  $ i= 2\ell.$   In these cases, we have $1,N-1, \ell , N-\ell\in T_i$ with the property
			$$\begin{array}{l} 1+N-1=N\equiv 0 \mod N,\\    \ell+N-\ell=N\equiv 0\mod N.
			\end{array}
			$$
			\item [\rm(4)]  $ i= 2\ell+1.$   In these cases, we have $\ell-1,N-\ell+1, 2\ell-1 , N-2\ell+1\in T_i$ with the property
			$$\begin{array}{l} \ell -1+N-\ell+1=N\equiv 0 \mod N,\\    2\ell-1+N-2\ell+1=N\equiv 0\mod N.
			\end{array}
			$$			    
			\item [\rm(5)]  $ 2\ell +2\leq i\leq n+1.$   In these cases, $0\notin I_i \subseteq[n,N-1]$.   	If the points in $\mathbb{Z}_N$ are looked as points in a circle, the points in   $I_i$ is a continuous interval of length  $2\ell+1$.  Suppose the minimum and maximum numbers in $I_i$ are $a$ and $b$ respectively. As the interval $[N-b,N-a]\subseteq [1,n-1]$ with length $2\ell +1$, it must contains at least two elements of $T_i$,  say, $x$ and $y$ (this is because the distance of any two elements in $ T_i $ are always less or equal to   $\ell$). There we have $x, N-x,y,N-y$ are nonzero elements in $T_i$ such that (see the middle figure of Fig. \ref{fig:structureT}  )
			$$\begin{array}{l} x+N-x=N\equiv 0 \mod N,\\    y+N-y=N\equiv 0\mod N.    \end{array}
			$$	
			\item  [\rm(6)] $ n+2 \leq i\leq N.$  This could separate into several cases which are similar to the above five cases. So we omit it here.
		\end{enumerate}
		To conclude, each $P_i$ (i.e., the $i$-th part of the set of the states in $\mathcal{T}$) contains two pairs of orthogonal states. \qed

	\end{document}